\documentclass[letterpaper, 10 pt, conference]{ieeeconf} 
\IEEEoverridecommandlockouts  
\usepackage{xcolor}
\usepackage{graphicx}      
\usepackage{import}        
\usepackage{amsmath,amsfonts}
\usepackage[hidelinks]{hyperref}
\usepackage{comment}

\usepackage{algorithm}
\usepackage{lmodern} 
\usepackage{algpseudocode}
\usepackage[utf8]{inputenc}
\usepackage{mathtools,halloweenmath}
\usepackage{booktabs} 
\usepackage{amssymb}
\usepackage{graphics} 
\usepackage{float}
\usepackage{epsfig} 
\usepackage{times} 
\usepackage{bm} 
\usepackage{cite}
\usepackage[english]{babel}
\usepackage{mdframed} 

\usepackage{amsthm}       

\let\labelindent\relax 
\usepackage{enumitem}

\newcommand{\mR}{{\mathbb R}}

\theoremstyle{plain}
\newtheorem{theorem}{Theorem}

\newtheorem{proposition}{Proposition}
\newtheorem{corollary}{Corollary}
\theoremstyle{definition}
\newtheorem{assumption}{Assumption}
\newtheorem{remark}{\textbf{Remark}}
\newtheorem{property}{Property}

\title{\LARGE Spectral Koopman-Hopf Formula  for Reachability with Adversary}
\author{Sarang Sutavani and Umesh Vaidya
\thanks{UV will like to acknowledge the financial support from CMMI Award 2531804. Sarang Sutavani is with the Dept. of Electrical Engr., Clemson University, Clemson SC. Email: ssutava@clemson.edu. Umesh Vaidya is with the Dept. of Mechanical Engr., Clemson University, Clemson SC. Email: uvaidya@clemson.edu. 
}
}
\begin{document}
\maketitle
\thispagestyle{empty}
\pagestyle{empty}

\begin{abstract}
This paper develops a spectral Koopman-Hopf framework for adversarial reachability analysis of nonlinear systems. By lifting the nonlinear drift dynamics into Koopman eigenfunction coordinates, the proposed approach transforms the original state-dependent Hamilton-Jacobi-Isaacs (HJI) equation into an approximate state-independent optimization problem in spectral coordinates. The transformed control and disturbance directions are approximated using least-squares spectral projections, and upper and lower bounds on the transformed Hamiltonian are used to derive corresponding bounds on the value function and backward reachable sets. Koopman eigenfunctions are computed using a path-integral-based Galerkin framework that avoids spatial discretization of the associated eigenfunction PDEs. Numerical examples demonstrate tractable approximation of backward reachable sets for nonlinear adversarial systems.
\end{abstract}


\section{Introduction}

Adversarial reachability for nonlinear systems is naturally formulated as a two-player differential game, where the backward reachable set is characterized by the viscosity solution of a Hamilton-Jacobi-Isaacs (HJI) equation \cite{bardi1997optimal}. Despite its theoretical completeness, numerical solution of HJI equations remains fundamentally limited by the curse of dimensionality, as grid-based methods scale exponentially with the state dimension \cite{mitchell2005time, mitchell2007comparing}. This limitation severely restricts the applicability of classical dynamic programming approaches to low-dimensional systems.

{\color{black}
The Hopf formula provides an alternative to full state-space discretization of the HJI PDE by representing the Hamilton-Jacobi solution as a pointwise optimization problem \cite{hopf1965generalized, bardi1984hopf}, enabling efficient parallel evaluation \cite{darbon2016algorithms, chow2019algorithm}.}
Recent works extend this paradigm to broader classes of problems, including characteristic-based formulations and generalized Hopf representations for state-dependent Hamiltonians \cite{yegorov2021perspectives, lee2020hopf, chen2024hopf}. However, these approaches typically involve solving auxiliary optimization problems along characteristics or introducing additional complexity in the dual formulation. For general nonlinear control-disturbance systems, the Hamiltonian remains state-dependent, and the adversarial min-max structure further limits the use of classical Hopf methods. 
Recent literature has explored the application of the generalized Hopf formula to nonlinear systems, though often without rigorous theoretical guarantees. For instance, Kirchner et al. \cite{kirchner2017time} applied the formula to nonlinear pursuit-evasion models using direct Jacobian linearization. More recently, Sharpless et al. \cite{sharpless2023koopman} employed Koopman-based lifting to map the dynamics and safe sets into higher-dimensional spaces for improved approximation. 
{\color{black}In contrast, our approach utilizes a nonlinear spectral transformation providing a more informed formulation than standard Jacobian-based linearization. If $n$ principal Koopman eigenfunctions are involved in the transformation then the dimensionality of the original problem is maintained.}

Koopman operator theory offers a complementary perspective by representing nonlinear dynamics through linear evolution in a space of observables \cite{mezic2020spectrum}. In particular, Koopman eigenfunctions define intrinsic coordinates in which the drift dynamics evolve linearly, providing a natural mechanism for uncovering latent linear structure in nonlinear systems \cite{korda2018linear, korda2020optimal, vaidya2025koopman}. In our recent work \cite{sutavani2025spectral}, we exploited this property to construct a spectral Koopman-Hopf framework for nonlinear optimal control without adversarial inputs, where a coordinate transformation renders the Hamiltonian state-independent under a rigidity condition, enabling direct application of the Hopf formula without state-space discretization.

This paper extends that framework to adversarial reachability governed by HJI equations. By lifting the nonlinear system into Koopman eigenfunction coordinates, the drift becomes linear and decoupled, and under a spectral and approximate spectral input rigidity condition, both control and disturbance directions are mapped to constant matrices. This transformation converts the nonlinear differential game into a representation with a state-independent Isaacs Hamiltonian, enabling a finite-dimensional Hopf-type approximation of the value function for nonlinear reachability problems.

A key component used is a path-integral construction of Koopman eigenfunctions, which replaces the need to solve eigenfunction PDEs on a spatial grid with trajectory-based evaluations \cite{deka2023path}. By converting the eigenfunction computation into integrals along  trajectories, the approach avoids discretization of the nonlinear HJI equation and the associated linear eigenfunction PDE. 
Consequently, the curse of dimensionality is shifted from the state space to a spectral approximation problem involving trajectory simulations.

The resulting spectral Koopman-Hopf framework provides a scalable, pointwise optimization based approach for adversarial reachability that combines operator-theoretic structure with convex Hamilton-Jacobi representations while preserving the original system dimension when only the $n$ principal eigenfunctions are involved. Using $N>n$ modes improves rigidity at higher optimization cost. 
Numerical examples demonstrate that the method enables tractable computation of reachable sets for nonlinear systems with nontrivial drift dynamics and adversarial inputs.

\section{Preliminaries: Koopman Spectral Theory}

Consider the autonomous nonlinear system
\begin{equation}
    \dot x = f(x), \qquad x\in\mathcal{X}\subset\mathbb{R}^n.
\end{equation}
Let $s_t(x)$ denote the flow map of $f$. 
The Koopman operator $\mathcal{K}^t$ acts linearly on observables 
$\psi:\mathcal{X}\to\mathbb{C}$ via
\[
(\mathcal{K}^t \psi)(x) := \psi(s_t(x)).
\]
Its infinitesimal generator is the Lie derivative
$\mathcal{L}_f \psi(x) = \nabla \psi(x)^\top f(x)$.
A function $\phi$ is a Koopman eigenfunction with eigenvalue $\lambda\in\mathbb{C}$ if
\begin{equation}
    \nabla \phi(x)^\top f(x) = \lambda \phi(x).
\end{equation}
Along trajectories, $
\phi(s_t(x)) = e^{\lambda t}\phi(x)$.
Thus Koopman eigenfunctions evolve linearly in time even when the state dynamics are nonlinear.
Suppose $x=0$ is a hyperbolic equilibrium of $f$, with linearization $A = Df(0)$. 
Under standard smooth conjugacy results for hyperbolic equilibria, there exist $n$ principal eigenfunctions $\phi_i$ whose eigenvalues coincide with the eigenvalues of $A$, and whose gradients at the origin align with the left eigenvectors of $A$. Defining
\[
\Phi(x) := (\phi_1(x),\dots,\phi_n(x))^\top,
\]
the map $\Phi$ defines a $C^1$ local diffeomorphism on a neighborhood $\mathcal{P}$ of the equilibrium.
In these coordinates 
\[
\frac{d}{dt}\Phi(x(t)) = \Lambda \Phi(x(t)),
\qquad
\Lambda = \mathrm{diag}(\lambda_1,\dots,\lambda_n).
\]
This spectral coordinate transformation will be the foundation for the Hopf reachability formulation.
In the reachability construction, we use real-valued spectral coordinates. For complex-conjugate Koopman eigenpairs, the real and imaginary parts of the corresponding eigenfunctions are used to form real coordinates.
The Koopman principal eigenfunctions enjoy the following property.

\begin{property}\label{property1}
    Let $\phi_{\lambda_1}$ and $\phi_{\lambda_2}$ be the eigenfunctions associated with eigenvalues $\lambda_1$ and $\lambda_2$ respectively. If $\phi_{\lambda_1}^{k_1}\phi_{\lambda_2}^{k_2}\in {\cal C}^1$, for $k_1,k_2\in \mR^{+}$, then it is an eigenfunctions of the Koopman generator with eigenvalue $k_1\lambda_1+k_2\lambda_2$. 
\end{property}

\section{Adversarial Reachability Formulation}

Consider the nonlinear control-disturbance system
\begin{equation}
    \dot x = f(x) + G(x)u + E(x)d,
    \qquad x \in \mathcal{X}\subset\mathbb{R}^n,
    \label{eq:system}
\end{equation}
where $u(\cdot)$ is the control input and $d(\cdot)$ is an adversarial disturbance.
The admissible sets $\mathcal{U}\subset\mathbb{R}^{m}$ and $\mathcal{D}\subset\mathbb{R}^{p}$ are compact, convex, and contain the origin. Let $\mathcal{T}\subset\mathcal{X}$ denote the target set.
We choose a continuous terminal cost $J:\mathcal{X}\to\mathbb{R}$ such that
\begin{align}
    &J(x)<0 \ \text{for } x\in\mathrm{int}(\mathcal{T}),\;\;
    J(x)=0 \ \text{for } x\in\partial\mathcal{T},\nonumber\\
    &J(x)>0 \ \text{for } x\in\mathcal{X}\setminus\mathcal{T}.
    \label{eq:terminal_cost_target}
\end{align}

To enforce the hard input constraints $u(t)\in\mathcal{U}$ and $d(t)\in\mathcal{D}$, we define the extended-value indicator function
\begin{equation}
    \mathbb{I}_{\mathcal{C}}(z)
    =
    \begin{cases}
        0, & z\in\mathcal{C},\\
        +\infty, & z\notin\mathcal{C},
    \end{cases}
    \label{eq:indicator}
\end{equation}
and consider the finite-horizon differential game with running cost $\Gamma(u(s),d(s)) = \mathbb{I}_{\mathcal{U}}(u(s))-\mathbb{I}_{\mathcal{D}}(d(s))$ 
\begin{equation}
    \begin{aligned}
        V(x,t)
        =
        \sup_{d(\cdot)}\;\inf_{u(\cdot)}
        \Bigg\{
        \int_t^T \Gamma(u(\tau),d(\tau)) d\tau + J(x(T)) \Bigg\}
    \end{aligned}
    \label{eq:reach_value_indicator}
\end{equation}
subject to \eqref{eq:system} with initial condition $x(t)=x$. Because the integrand in \eqref{eq:reach_value_indicator} is finite if and only if $u(\tau)\in\mathcal{U}$ and $d(\tau)\in\mathcal{D}$ for almost all $\tau\in[t,T]$, the value function \eqref{eq:reach_value_indicator} is equivalent to the constrained terminal-cost game
\begin{equation}
    V(x,t)
    =
    \sup_{d(\cdot)\in\mathcal{D}}\;\inf_{u(\cdot)\in\mathcal{U}} J(x(T)),
    \label{eq:reach_value_constrained}
\end{equation}
which is the standard reachability formulation. The controller seeks to drive the state into $\mathcal{T}$ by time $T$ despite the worst-case disturbance. The backward reachable set is characterized by the zero sublevel set of the value function:
\begin{equation}
    \mathcal{R}(t)
    =
    \{ x \in \mathcal{X} \mid V(x,t) \le 0 \}.
    \label{eq:brs_def}
\end{equation}
Thus, $V(x,t)\le 0$ if and only if the controller can force the system into $\mathcal{T}$ at time $T$ while satisfying the hard constraints $u(\cdot)\in\mathcal{U}$ and $d(\cdot)\in\mathcal{D}$ under the worst-case disturbance.


\subsection{Hamiltonian and HJI PDE}

The indicator running cost in \eqref{eq:reach_value_indicator} enforces the hard constraints $u(\cdot)\in\mathcal{U}$ and $d(\cdot)\in\mathcal{D}$ almost everywhere. 
The associated differential game is assumed to satisfy.

\begin{assumption}[Isaacs condition]\label{assume:isaacs_condition}
    The min-max and max-min Hamiltonians coincide for all $(x,p)\in\mathbb R^n\times\mathbb R^n$, i.e.,
    \begin{equation}\label{eq:isaacs_condition}
        \begin{aligned}
            \sup_{d\in\mathcal D}\;\inf_{u\in\mathcal U}
            &\ p^\top\!\big(f(x)+G(x)u+E(x)d\big)
            \\
            &=
            \inf_{u\in\mathcal U}\;\sup_{d\in\mathcal D}
            \ p^\top\!\big(f(x)+G(x)u+E(x)d\big).
        \end{aligned}
    \end{equation}
\end{assumption}
Under Assumption \ref{assume:isaacs_condition}, the dynamic programming principle yields the Hamilton-Jacobi-Isaacs (HJI) equation for the value function.
Writing $p=\nabla_x V(x,t)$, the Hamiltonian is
\begin{equation}
    H(x,p)
    =
    \inf_{u\in\mathcal{U}}\;\sup_{d\in\mathcal{D}}
    p^\top\big(f(x)+G(x)u+E(x)d\big).
    \label{eq:hamiltonian}
\end{equation}
Consequently, the value function, $V$, satisfies the HJI PDE
\begin{equation}
    \frac{\partial V}{\partial t}(x,t) + H\!\left(x,\nabla_x V(x,t)\right)=0,
    \quad V(x,T)=J(x),
    \label{eq:hji}
\end{equation}
in the viscosity sense. The backward reachable set at time $t$ is then given by the zero sublevel set $\mathcal{R}(t)=\{x\mid V(x,t)\le 0\}$, and the controller can guarantee reachability of $\mathcal{T}$ at time $T$ against all disturbances $d(\cdot)\in\mathcal{D}$ if and only if $V(x,t)\le 0$. Whenever the minimizer and maximizer in \eqref{eq:hamiltonian} exist, the corresponding optimal feedback policies can be expressed as
\begin{equation}
    \begin{aligned}
        u^\star(x,t)\in\arg\min_{u\in\mathcal{U}}
        \nabla_x V(x,t)^\top G(x)u,
        \\
        d^\star(x,t)\in\arg\max_{d\in\mathcal{D}}
        \nabla_x V(x,t)^\top E(x)d,
    \end{aligned}\label{eq:opt_controls}
\end{equation}
where the drift term $\nabla_x V^\top f(x)$ does not affect the minimization/maximization since it is independent of $u$ and $d$.
For common compact convex sets (e.g., boxes or Euclidean balls), the optimizers in \eqref{eq:opt_controls} admit closed-form expressions as projections/saturations determined by the sign of the corresponding components of $G(x)^\top\nabla_x V$ and $E(x)^\top\nabla_x V$.

\section{Spectral Koopman Hopf Formula}

We utilize Koopman spectral coordinates to eliminate state dependence in the Hamiltonian, yielding a tractable representation of the Hamilton–Jacobi–Isaacs (HJI) equation:
\[
\frac{\partial V}{\partial t}
+
\inf_{u\in\mathcal U}
\sup_{d\in\mathcal D}
\nabla_x V^\top\!\left(f(x)+G(x)u+E(x)d\right)
=
0.
\]
Let $\Phi(x)$ denote the principal Koopman eigenfunction map satisfying
$\frac{\partial \Phi}{\partial x} f(x) = \Lambda \Phi(x)$.
We introduce the time-varying spectral coordinate $X = e^{-\Lambda t}\Phi(x)$ that renders the drift dynamics stationary in spectral space.

\subsection{Transformed Input Directions in Spectral Coordinates}

Let
$
\Phi_N(x)=\begin{bmatrix}
\phi_1(x) & \cdots & \phi_N(x)
\end{bmatrix}^\top
$
denote a truncated Koopman eigenfunction map consisting of the principal eigenfunctions together with higher-order eigenfunctions generated using the algebra property of Koopman eigenfunctions (Property \ref{property1}). 
{\color{black}The retained eigenfunctions are ordered according to the magnitude of the real part of their associated Koopman eigenvalues, with the principal eigenfunctions corresponding to the dominant spectral modes near the equilibrium.}
The spectral coordinates are
\[
X=e^{-\Lambda_N t}\Phi_N(x),
\;\;\Lambda_N=\mathrm{diag}(\lambda_1,\dots,\lambda_N).
\]
Using the chain rule,
{\color{black}
\begin{align*}
    \nabla_x V(x,t)
    =
    \left(\frac{\partial \Phi_N}{\partial x}(x)\right)^\top
    e^{-\Lambda_N^\top t}
    \nabla_X \bar V(X,t).
\end{align*}
}
Define the transformed control and disturbance directions
\begin{equation}
    M_{u,N}(x):=
    \frac{\partial\Phi_N}{\partial x}(x)G(x),
    \
    M_{d,N}(x):=
    \frac{\partial\Phi_N}{\partial x}(x)E(x).
    \label{eq:MuMd_def}
\end{equation}
Substituting into the Hamiltonian yields
\begin{equation}
    \bar H(X,P,t)
    =
    \inf_{u\in\mathcal U}
    \sup_{d\in\mathcal D}
    P^\top e^{-\Lambda_N t}
    \left(
    M_{u,N}(x)u + M_{d,N}(x)d
    \right),
    \label{eq:Hbar_general}
\end{equation}
where $P=\nabla_X \bar V(X,t)$. 
The transformed Hamiltonian remains state dependent through the matrices $M_u(x)$ and $M_d(x)$.
The objective of the proposed framework is to approximate or bound these transformed input directions using state-independent representations.

\subsection{Approximate Spectral Rigidity}\label{sec:approx_rigidity}

We seek constant matrices
\(
B_{u,N}\in\mathbb R^{N\times m},
\
B_{d,N}\in\mathbb R^{N\times p},
\)
that best approximate the transformed input directions
over a neighborhood $\mathcal P$.
Specifically, we solve the least-squares projection problem
\begin{equation}
    \begin{aligned}
        B_{u,N}
        &=
        \arg\min_{B\in\mathbb R^{N\times m}}
        \int_{\mathcal P}
        \|B-M_{u,N}(x)\|_F^2 \rho(x)\,dx,
        \\
        B_{d,N}
        &=
        \arg\min_{B\in\mathbb R^{N\times p}}
        \int_{\mathcal P}
        \|B-M_{d,N}(x)\|_F^2 \rho(x)\,dx,
    \end{aligned} \label{eq:BuBd_fit}
\end{equation}
where $\rho(x)$ is a sampling density on $\mathcal P$. Using uniform i.i.d. sampled data $\{x_k\}_{k=1}^K$, the solution reduces to
\begin{equation}
    B_{u,N}=
    \frac1K\sum_{k=1}^K M_{u,N}(x_k),
    \;\;
    B_{d,N}=
    \frac1K\sum_{k=1}^K M_{d,N}(x_k). \label{eq:BuBd_sample_approx}
\end{equation}

The corresponding approximation residuals are 
\begin{equation}
    \begin{aligned}
        \varepsilon_{u,N}
        =
        \sup_{x\in\mathcal P}
        \|M_{u,N}(x)-B_{u,N}\|,
        \\
        \varepsilon_{d,N}
        =
        \sup_{x\in\mathcal P}
        \|M_{d,N}(x)-B_{d,N}\|.
    \end{aligned}
    \label{eq:rigidity_residual}
\end{equation}

Substituting the approximations
$M_{u,N}(x)\approx B_{u,N}$ and $M_{d,N}(x)\approx B_{d,N}$
into \eqref{eq:Hbar_general} yields the approximate
state-independent Hamiltonian \eqref{eq:Hbar_approx} and a pointwise Hopf approximation of the value \eqref{eq:generalized_Hopf_formula}, where $\bar J^*$ is the  Fenchel-Legendre transform of the terminal cost in spectral coordinates.
\begin{equation}
    \bar H_N(P,t)
    =
    \inf_{u\in\mathcal U}
    \sup_{d\in\mathcal D}
    P^\top e^{-\Lambda_N t}
    (B_{u,N} u+B_{d,N} d).
    \label{eq:Hbar_approx}
\end{equation}

\begin{equation}\label{eq:generalized_Hopf_formula}
    \bar V(X,t)
    =
    \sup_P
    \left\{
    P^\top X
    +
    \int_t^T \bar H(P,\tau)\,d\tau
    -
    \bar J^*(P)
    \right\}.
\end{equation}
The original value and backward reachable set follow from $V(x,t)=\bar V(e^{-\Lambda t}\Phi(x),t)$ and $\mathcal R(t)=\{x:V(x,t)\le0\}$. 
Equation \eqref{eq:generalized_Hopf_formula} is a finite-dimensional and evaluated at each query $x$. No HJI state-space grid discretization is required.

For a compact convex set $\mathcal C \subset \mathbb R^m$, its support function is defined by $\sigma_{\mathcal C}(y) = \sup_{c\in\mathcal C} y^\top c$.
Using this notation, the approximate Hamiltonian
\eqref{eq:Hbar_approx} can be written as
\begin{equation*}
    \bar H_N(P,t)
    =
    -
    \sigma_{\mathcal U}
    \!\left(
    -B_{u,N}^\top e^{-\Lambda_N t}P
    \right)
    +
    \sigma_{\mathcal D}
    \!\left(
    B_{d,N}^\top e^{-\Lambda_N t}P
    \right).
\end{equation*}

\begin{remark}[Structure of the transformed Hamiltonian]
    The approximated Hamiltonian \eqref{eq:Hbar_approx} takes the form
    \(\bar H_N(P,t) = - \sigma_{\mathcal U}(\cdot) + \sigma_{\mathcal D}(\cdot)\), and is typically a difference-of-convex functions in the costate $P$.
    Hence, the resulting Hopf optimization problem need not be globally convex.
\end{remark}

Because \eqref{eq:generalized_Hopf_formula} can be nonconvex, it is treated as a numerical approximation unless global optimality is certified.
Using $N>n$ modes can reduce the residuals $\epsilon_{u,N}$ and $\epsilon_{d,N}$, but increases the costate dimension to $N$. 
However, unlike $O(M^n)$ grid-based HJI, each query still requires only an $N$-dimensional optimization.

\begin{assumption}[Spectral approximation consistency]
    Let $\Phi_N$ be the truncated Koopman eigenfunction dictionary used to construct the transformed control and disturbance inputs maps $M_{u,N}$ and $M_{d,N}$ in \eqref{eq:MuMd_def}. 
    There exist limiting transformed input maps $M_u,M_d$ such that \( \|M_{u,N}-M_u\|_{L^\infty(\mathcal P)}\to 0, \ \|M_{d,N}-M_d\|_{L^\infty(\mathcal P)}\to 0 \)
    as $N\to\infty$.
\end{assumption}

\begin{proposition}[Hamiltonian approximation error]\label{thm:propos1}
    Assume $\mathcal U$ and $\mathcal D$ are compact and let
    \(
    U_{\max}:=\sup_{u\in\mathcal U}\|u\|,
    \
    D_{\max}:=\sup_{d\in\mathcal D}\|d\|.
    \)
    Then the approximate Hamiltonian $\bar H_N$ \eqref{eq:Hbar_approx} and the transformed Hamiltonian \eqref{eq:Hbar_general} satisfy
    \begin{equation}
        \begin{aligned}
            |\bar H(X,P,t)-&\bar H_N(P,t)| \le \\
            &\|e^{-\Lambda_N^\top t}P\|
            \left(
            U_{\max}\varepsilon_{u,N}
            +
            D_{\max}\varepsilon_{d,N}
            \right).
        \end{aligned}
    \end{equation}
\end{proposition}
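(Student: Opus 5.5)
The argument rests on one elementary fact: inf-sup of functions that are uniformly close are themselves close. Fix $(X,P,t)$ and let $x=\Phi_N^{-1}(e^{\Lambda_N t}X)\in\mathcal P$ be the state at which $M_{u,N}(x)$ and $M_{d,N}(x)$ are evaluated. Set $q:=e^{-\Lambda_N^\top t}P$, so that $P^\top e^{-\Lambda_N t}v=q^\top v$ for any $v\in\mathbb R^N$. (For real spectral coordinates $\Lambda_N$ is real block-diagonal, and the transpose is the right object here.) Define the two payoffs
\[
F(u,d):=q^\top\big(M_{u,N}(x)u+M_{d,N}(x)d\big),\qquad F_N(u,d):=q^\top\big(B_{u,N}u+B_{d,N}d\big),
\]
so that $\bar H=\inf_{u\in\mathcal U}\sup_{d\in\mathcal D}F$ and $\bar H_N=\inf_{u\in\mathcal U}\sup_{d\in\mathcal D}F_N$ by \eqref{eq:Hbar_general} and \eqref{eq:Hbar_approx}.

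\textbf{Step 1 (uniform payoff bound).} For every $(u,d)\in\mathcal U\times\mathcal D$,
\[
|F(u,d)-F_N(u,d)|\le \|q\|\big(\|M_{u,N}(x)-B_{u,N}\|\,\|u\|+\|M_{d,N}(x)-B_{d,N}\|\,\|d\|\big)\le \|q\|\,(U_{\max}\varepsilon_{u,N}+D_{\max}\varepsilon_{d,N})=:\delta .
\]
This follows from Cauchy--Schwarz, the operator-norm inequality, the definitions \eqref{eq:rigidity_residual} (using $x\in\mathcal P$), and compactness of $\mathcal U,\mathcal D$, which makes $U_{\max},D_{\max}$ finite. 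If the residuals in \eqref{eq:rigidity_residual} are measured in the Frobenius norm, the same bound holds because the Frobenius norm dominates the spectral norm.

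\textbf{Step 2 (inf--sup is 1-Lipschitz in sup norm).} From $F\le F_N+\delta$ pointwise we get $\sup_d F(u,d)\le \sup_d F_N(u,d)+\delta$ for each $u$. Taking $\inf_u$ gives $\bar H\le \bar H_N+\delta$, and the symmetric argument gives $\bar H_N\le\bar H+\delta$. All quantities are finite, since the payoffs are continuous and the sets are compact, so there is no $\infty-\infty$ issue. Together these give $|\bar H-\bar H_N|\le\delta$, which is the claimed inequality with $\|e^{-\Lambda_N^\top t}P\|$.

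The only delicate point, though still a mild one, is bookkeeping. It consists of making sure the residual is taken over a set containing the state $x$ associated with $X$, i.e.\ that $x$ lies in the neighborhood $\mathcal P$ where $\Phi_N$ is invertible, and of matching the transpose or complex-conjugate convention on $e^{-\Lambda_N t}$ so that the factor appears exactly as stated. The Isaacs condition is not needed for this argument, since the comparison holds for the inf--sup order directly.
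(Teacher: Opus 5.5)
Your proof is correct and follows the paper's route. The paper cites the minimax property $|\inf_u\sup_d h_1-\inf_u\sup_d h_2|\le\sup_{u,d}|h_1-h_2|$, splits the difference with the triangle inequality, and finishes with Cauchy--Schwarz; you apply the Cauchy--Schwarz/operator-norm bound pointwise first and then prove the $1$-Lipschitz property of $\inf$--$\sup$ directly, also making explicit the needed condition $x\in\mathcal P$.
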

\begin{proof}
    By the minimax property $|\inf_u \sup_d h_1 - \inf_u \sup_d h_2| \le \sup_{u,d} |h_1 - h_2|$ and triangle inequality, 
    \begin{align*}
        |\bar H - \bar H_N| 
        &\le \sup_{u\in\mathcal U} \left| P^\top e^{-\Lambda_N t} (M_{u,N}(x) - B_{u,N})u \right| \\
        &+ \sup_{d\in\mathcal D} \left| P^\top e^{-\Lambda_N t} (M_{d,N}(x) - B_{d,N})d \right|.
    \end{align*}
    Applying Cauchy-Schwarz gives the stated bound.
\end{proof}

\begin{remark}[Convergence and spectral approximation]
    Proposition~1 shows that the Hamiltonian approximation error is controlled by $\|e^{-\Lambda_N^\top t}P\| (U_{\max}\epsilon_{u,N}+D_{\max}\epsilon_{d,N})$.
    Thus, if this quantity vanishes as $N\to\infty$ uniformly on compact costate-time sets, then $\bar H_N$ converges locally uniformly to $\bar H$. Under standard stability and comparison assumptions for viscosity solutions, together with consistent terminal data, the corresponding value functions also converge. 
    A general guarantee of vanishing residuals requires additional approximation properties of the Koopman eigenfunction dictionary, which are not established here. Hence, $\epsilon_{u,N}$ and $\epsilon_{d,N}$ serve as i
    .ndicators of spectral approximation accuracy.
\end{remark}

\subsection{State-Independent Bounds on the Hamiltonian} \label{sec:reachability_bounds}

We now derive state-independent upper and lower bounds
for the transformed Hamiltonian. Assume ellipsoidal sets
\[
\mathcal U
=
\{u:u^\top R^{-1}u\le1\},
\;\;
\mathcal D
=
\{d:d^\top S^{-1}d\le1\}.
\]
\begin{equation}
    \begin{aligned}
            Q_u(x):= M_{u,N}(x)R M_{u,N}(x)^\top,
            \\
            Q_d(x):= M_{d,N}(x)S M_{d,N}(x)^\top.
    \end{aligned}
    \label{eq:QuQd}
\end{equation}

Assume there exist constant matrices satisfying
\begin{equation}
    \underline Q_u
    \preceq
    Q_u(x)
    \preceq
    \overline Q_u,
    \
    \underline Q_d
    \preceq
    Q_d(x)
    \preceq
    \overline Q_d,
    \
    \forall x\in\mathcal P.
    \label{eq:Q_bounds}
\end{equation}

\begin{theorem}
Under \eqref{eq:Q_bounds}, the transformed Hamiltonian
\eqref{eq:Hbar_general} admits state-independent bounds
\[
\bar H^\ell(P,t)
\le
\bar H(X,P,t)
\le
\bar H^u(P,t),
\]
where
\begin{equation}
    \begin{aligned}
        \bar H^u(P,t) = - \left\| e^{-\Lambda_N t}P \right\|_{\underline Q_u} + \left\| e^{-\Lambda_N t}P \right\|_{\overline Q_d},\\
        \bar H^\ell(P,t) = - \left\| e^{-\Lambda_N t}P \right\|_{\overline Q_u} + \left\| e^{-\Lambda_N t}P \right\|_{\underline Q_d}.
    \end{aligned}
\end{equation}

\end{theorem}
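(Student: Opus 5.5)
The plan is to compute the inner $\sup_d$ and outer $\inf_u$ in \eqref{eq:Hbar_general} in closed form using the support function of an ellipsoid, and then apply the Loewner-order bounds \eqref{eq:Q_bounds} to each term separately. Write $y := e^{-\Lambda_N^\top t}P$; I will read $e^{-\Lambda_N t}$ in the statement as the same real diagonal (or real block) matrix. Since the objective is separable in $u$ and $d$, \eqref{eq:Hbar_general} becomes
\[
\bar H(X,P,t) = \inf_{u\in\mathcal U} y^\top M_{u,N}(x)u + \sup_{d\in\mathcal D} y^\top M_{d,N}(x) d
= -\sigma_{\mathcal U}\!\left(-M_{u,N}(x)^\top y\right) + \sigma_{\mathcal D}\!\left(M_{d,N}(x)^\top y\right).
\]
This matches the support-function form already given for $\bar H_N$, with $B_{u,N}, B_{d,N}$ replaced by $M_{u,N}(x), M_{d,N}(x)$.

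\textbf{Step 1: support function of an ellipsoid.} For $\mathcal U=\{u:u^\top R^{-1}u\le 1\}$ with $R\succ 0$, substitute $u=R^{1/2}v$ with $\|v\|\le 1$ and apply Cauchy--Schwarz. This gives $\sigma_{\mathcal U}(w)=\sqrt{w^\top R w}$. Taking $w = -M_{u,N}(x)^\top y$ yields
\[
\sigma_{\mathcal U}\!\left(-M_{u,N}(x)^\top y\right) = \sqrt{y^\top Q_u(x) y} = \|y\|_{Q_u(x)}.
\]
The same computation gives $\sigma_{\mathcal D}\!\left(M_{d,N}(x)^\top y\right)=\|y\|_{Q_d(x)}$. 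Hence
\[
\bar H(X,P,t) = -\|y\|_{Q_u(x)} + \|y\|_{Q_d(x)}.
\]

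\textbf{Step 2: monotonicity.} The bounds \eqref{eq:Q_bounds} give $y^\top\underline Q_u y \le y^\top Q_u(x) y\le y^\top \overline Q_u y$ for every $x\in\mathcal P$, and similarly for $Q_d$. Since $\sqrt{\cdot}$ is monotone on $[0,\infty)$, each weighted norm is sandwiched between the corresponding bounding norms. One technical point: the square roots must be well defined. I would take $\underline Q_u,\underline Q_d\succeq 0$, which is harmless since $Q_u(x),Q_d(x)\succeq 0$ and any lower bound can be replaced by its positive semidefinite part.

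\textbf{Step 3: combine.} The control term enters with a minus sign and the disturbance term with a plus sign. The upper bound therefore uses $\underline Q_u$ and $\overline Q_d$, giving $\bar H^u$. The lower bound uses $\overline Q_u$ and $\underline Q_d$, giving $\bar H^\ell$. Because the Loewner bounds hold uniformly on $\mathcal P$, both bounds are independent of $X$, and hence of $x$.

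\textbf{Main obstacle.} The argument is elementary, and the only delicate part is bookkeeping. First, the transposition convention: the norm should be evaluated at $e^{-\Lambda_N^\top t}P$, which equals $e^{-\Lambda_N t}P$ for real diagonal $\Lambda_N$. Second, complex eigenpairs: the real-coordinate convention must be used so that $Q_u$ and $Q_d$ are real symmetric. Third, the sign reversal in Step 3 must be handled correctly.
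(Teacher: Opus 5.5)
Your proposal is correct and is essentially the paper's proof. The ellipsoid support function turns the transformed Hamiltonian into $-\|y\|_{Q_u(x)}+\|y\|_{Q_d(x)}$, and Loewner monotonicity, with the sign flip on the control term, gives the two bounds. Your remark that the norm is really evaluated at $e^{-\Lambda_N^\top t}P$ is a useful clarification that the paper glosses over.

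One aside is wrong, though the proof does not need it. Replacing an indefinite $\underline Q$ by its positive semidefinite part can break $\underline Q\preceq Q(x)$. For example, $\mathrm{diag}(1,-1)\preceq \begin{bmatrix}2&b\\ b&1/2\end{bmatrix}$ with $b^2=4/5$, yet $\mathrm{diag}(1,0)$ is not below that matrix. So either assume $\underline Q_u,\underline Q_d\succeq 0$ outright, as the statement implicitly does, or use $\sqrt{\max(y^\top\underline Q y,0)}$ pointwise as the lower bound.
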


\begin{proof}
    The support functions of ellipsoids satisfy
    $\sigma_{\mathcal U}(y) = \|y\|_{R}$ and $\sigma_{\mathcal D}(y) = \|y\|_{S}$.
    Substituting into \eqref{eq:Hbar_general} gives $\bar H = - \left\| e^{-\Lambda_N t}P \right\|_{Q_u(x)} + \left\| e^{-\Lambda_N t}P \right\|_{Q_d(x)}$.
    The result follows from monotonicity of the quadratic form
    with respect to the Loewner ordering.
\end{proof}

\begin{theorem}[Value function bounds]
    Let $\bar V$ denote the viscosity solution of
    \[
    \partial_t \bar V(X,t)+\bar H(X,\nabla_X\bar V,t)=0,
    \qquad
    \bar V(X,T)=\bar J(X).
    \]
    Suppose there exist state-independent Hamiltonians
    $\bar H^\ell(P,t)$ and $\bar H^u(P,t)$ such that
    \[
    \bar H^\ell(P,t)
    \le
    \bar H(X,P,t)
    \le
    \bar H^u(P,t),
    \qquad
    \forall (X,P,t).
    \]
    Let $\bar V^\ell$ and $\bar V^u$ be the viscosity solutions
    \begin{equation*}
        \begin{aligned}
            \partial_t \bar V^{\ell} + \bar H^{\ell}(\nabla_X\bar V^{\ell},t)=0,&
            \;
            \bar V^{\ell}(X,T)=\bar J(X),
            \\
            \partial_t \bar V^u + \bar H^u(\nabla_X\bar V^u,t)=0,&
            \;
            \bar V^u(X,T)=\bar J(X).
        \end{aligned}
    \end{equation*}
    Then, under the comparison principle for viscosity solutions,
    \[
    \bar V^\ell(X,t)
    \le
    \bar V(X,t)
    \le
    \bar V^u(X,t),
    \qquad
    (X,t)\in\mathcal P\times[0,T].
    \]
    Consequently, in the original coordinates,
    \[
    V^\ell(x,t)
    \le
    V(x,t)
    \le
    V^u(x,t),
    \]
    where
    \[
    V^\ell(x,t)=\bar V^\ell(e^{-\Lambda t}\Phi(x),t),
    \;\;
    V^u(x,t)=\bar V^u(e^{-\Lambda t}\Phi(x),t).
    \]
\end{theorem}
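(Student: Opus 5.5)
The plan is a standard sub/supersolution comparison argument for terminal-value Hamilton--Jacobi equations, followed by composition with the spectral change of variables. Throughout, $\bar V$, $\bar V^\ell$, $\bar V^u$ are the viscosity solutions of the three terminal-value problems in the statement, all with the same terminal data $\bar J$. I take the comparison principle to hold for each of the three Hamiltonians on $\mathcal P\times[0,T]$. The minimal structural requirements are continuity in $(P,t)$ and Lipschitz dependence on $P$ with at most linear growth. For $\bar H^\ell,\bar H^u$ of the Loewner-norm form in the preceding theorem these hold, since $P\mapsto \|e^{-\Lambda_N t}P\|_{Q}$ is Lipschitz uniformly on $[0,T]$. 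For $\bar H$ they follow from continuity of $M_{u,N},M_{d,N}$ on $\mathcal P$ and compactness of $\mathcal U,\mathcal D$.

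\textbf{Step 1 (upper bound).} I will show that $\bar V^u$ is a viscosity supersolution of the equation with Hamiltonian $\bar H$. Note the sign convention: for the backward problem $\partial_t W + H(X,\nabla W,t)=0$, a supersolution satisfies $\partial_t\varphi + H \le 0$ at local minima of $\bar V^u-\varphi$; I will fix this convention explicitly. Take a smooth test function $\varphi$ with $\bar V^u-\varphi$ attaining a local minimum at $(X_0,t_0)$. Since $\bar V^u$ is a viscosity solution for $\bar H^u$, we get $\partial_t\varphi + \bar H^u(\nabla\varphi,t_0)\le 0$. Using $\bar H\le \bar H^u$ pointwise gives $\partial_t\varphi + \bar H(X_0,\nabla\varphi,t_0)\le 0$, so $\bar V^u$ is a supersolution for $\bar H$. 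Since $\bar V$ is a solution, hence a subsolution, and the terminal data coincide, comparison for $\bar H$ yields $\bar V\le \bar V^u$.

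\textbf{Step 2 (lower bound).} Symmetrically, I will show that $\bar V^\ell$ is a subsolution for $\bar H$. At a local maximum of $\bar V^\ell-\varphi$ we have $\partial_t\varphi+\bar H^\ell\ge 0$, and $\bar H\ge\bar H^\ell$ then gives $\partial_t\varphi+\bar H\ge 0$. Comparison for $\bar H$ gives $\bar V^\ell\le\bar V$.

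\textbf{Step 3 (original coordinates).} I will evaluate all three inequalities at $X=e^{-\Lambda t}\Phi(x)$. Because $\Phi$ is a diffeomorphism on $\mathcal P$, the chain rule identity $V(x,t)=\bar V(e^{-\Lambda t}\Phi(x),t)$, used earlier to derive \eqref{eq:Hbar_general}, applies. The pointwise inequalities therefore transfer directly and give $V^\ell\le V\le V^u$.

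\textbf{Main obstacle.} The main obstacle is the legitimacy of the comparison step on the bounded set $\mathcal P$ rather than on all of $\mathbb R^N$. The bounds on $\bar H$ only hold for $X$ in the image of $\mathcal P$, and a boundary condition or an invariance/finite-speed-of-propagation argument is needed. My first approach would be to use that all Hamiltonians are Lipschitz in $P$ with a common constant $L$ (from $U_{\max},D_{\max}$, the matrix bounds and $\sup_t\|e^{-\Lambda_N t}\|$). With this, comparison holds on cones of dependence, and the inequalities are guaranteed on the subset of $\mathcal P\times[0,T]$ whose backward-in-time cone stays inside $\mathcal P$. Failing that, I would simply state the result as holding under the assumed comparison principle, as the statement itself does.
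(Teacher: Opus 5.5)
Your proposal is correct and is essentially the paper's argument made explicit: the ordering $\bar H^\ell\le\bar H\le\bar H^u$ makes $\bar V^u$ a supersolution and $\bar V^\ell$ a subsolution of the $\bar H$-equation (your sign conventions for the terminal-value problem are right), comparison with the common terminal data $\bar J$ gives $\bar V^\ell\le\bar V\le\bar V^u$, the inequalities transfer pointwise through $X=e^{-\Lambda t}\Phi(x)$, and your domain-of-dependence remark addresses a localization issue on $\mathcal P$ that the paper does not discuss. One small correction: continuity of $M_{u,N},M_{d,N}$ alone does not give comparison for the state-dependent $\bar H$, since standard theorems need a Lipschitz-type modulus in $X$; this is covered by the assumed comparison principle, but you can avoid it entirely by instead showing that $\bar V$ is a subsolution for $\bar H^u$ and a supersolution for $\bar H^\ell$, so that comparison is invoked only for the state-independent Hamiltonians.
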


\begin{proof}
    The result follows from the viscosity comparison principle.
    Since $\bar H^\ell\le \bar H\le \bar H^u$ and all three value functions share the same terminal condition $\bar J$, the solution associated with the smaller Hamiltonian provides a lower solution, while the solution associated with the larger Hamiltonian provides an upper solution. Hence $\bar V^\ell\le \bar V\le \bar V^u$.
    The result in the original coordinates follows by the spectral change of variables $X=e^{-\Lambda t}\Phi(x)$.
\end{proof}
\begin{corollary}[Reachable set enclosure]
    Let $\mathcal R(t)=\{x:V(x,t)\le0\}$. Define
    \[
    \mathcal R^u(t)=\{x:V^u(x,t)\le0\},
    \;\;
    \mathcal R^\ell(t)=\{x:V^\ell(x,t)\le0\}.
    \]
    Then $\mathcal R^u(t) \subseteq \mathcal R(t) \subseteq \mathcal R^\ell(t)$.
\end{corollary}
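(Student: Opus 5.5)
The corollary follows directly from the pointwise ordering of the value functions in the preceding theorem (Value function bounds), since sublevel sets are monotone with respect to pointwise ordering of functions. The plan is to fix $t\in[0,T]$ and work only with the inequality
\[
V^\ell(x,t)\le V(x,t)\le V^u(x,t)
\]
for every $x$ in the domain where the spectral coordinates are valid, i.e.\ $x$ such that $e^{-\Lambda t}\Phi(x)\in\mathcal P$. I would first state that the three sets $\mathcal R^u(t)$, $\mathcal R(t)$, $\mathcal R^\ell(t)$ are all taken over this common domain. This matters because the comparison in the theorem is only asserted on $\mathcal P\times[0,T]$, and the change of variables $X=e^{-\Lambda t}\Phi(x)$ is only a diffeomorphism on the neighborhood where the principal eigenfunctions are defined.

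\textbf{First inclusion, $\mathcal R^u(t)\subseteq\mathcal R(t)$.} Take $x\in\mathcal R^u(t)$, so $V^u(x,t)\le 0$. The upper bound $V(x,t)\le V^u(x,t)$ then gives $V(x,t)\le 0$, hence $x\in\mathcal R(t)$. Interpretively, the upper value corresponds to a weaker controller and a stronger adversary, so certified reachability under $V^u$ implies reachability for the true game.

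\textbf{Second inclusion, $\mathcal R(t)\subseteq\mathcal R^\ell(t)$.} Take $x\in\mathcal R(t)$, so $V(x,t)\le 0$. The lower bound $V^\ell(x,t)\le V(x,t)$ gives $V^\ell(x,t)\le 0$, hence $x\in\mathcal R^\ell(t)$. Chaining the two inclusions yields the claim.

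\textbf{Main obstacle.} The logic is trivial; the only delicate point is making sure the hypotheses of the previous theorem actually hold. These are:
\begin{itemize}
\item the Hamiltonian sandwich $\bar H^\ell\le\bar H\le\bar H^u$ for all relevant $(X,P,t)$, supplied by the ellipsoidal Loewner bounds \eqref{eq:Q_bounds} via the theorem on state-independent Hamiltonian bounds;
\item the availability of a comparison principle on the (possibly bounded) set $\mathcal P$, which on a bounded domain normally also requires control of boundary behavior or invariance of $\mathcal P$ under the relevant characteristics;
\item the identification $V(x,t)=\bar V(e^{-\Lambda t}\Phi(x),t)$, which holds because $\Phi$ is a $C^1$ diffeomorphism on $\mathcal P$, so viscosity solutions transform consistently under the change of variables.
\end{itemize}
I would state these as inherited from the theorem rather than reprove them, and note explicitly that the enclosure is therefore local to $\mathcal P$.
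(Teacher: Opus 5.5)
Your argument is correct and matches the paper's reasoning: the corollary follows immediately from the pointwise ordering $V^\ell\le V\le V^u$ in the value-function bounds theorem, because zero sublevel sets are monotone under pointwise ordering of functions. Your caveat is appropriate: the enclosure is only as valid as that theorem's hypotheses (the comparison principle, and the spectral change of variables being valid on $\mathcal P$), so it holds locally on $\mathcal P$.
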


\section{Koopman Eigenfunction Approximation} \label{sec:eigenfunction_compute}


To compute the principal Koopman eigenfunctions in a curse of dimensionality free manner, we adopt the linear-nonlinear decomposition and finite-time path-integral construction used in \cite{deka2023path}. Assume the drift admits a decomposition on a neighborhood \(\mathcal P\) of the equilibrium,
\begin{equation}\label{eq:split}
f(x)=Ax+f_n(x), \qquad f_n(0)=0,\quad Df_n(0)=0,
\end{equation}
where \(A:=Df(0)\) is the Jacobian at the equilibrium and \(f_n\) denotes the nonlinear remainder.
Let \((\lambda_i,w_i)\) be a (right) eigenpair of \(A^\top\), i.e., $A^\top w_i = \lambda_i w_i$.
For the linear system \(\dot x=Ax\), the corresponding Koopman eigenfunction is
\(\phi_i^{\mathrm{lin}}(x)=w_i^\top x\).
For the nonlinear system \(\dot x=f(x)\), we seek a principal eigenfunction of the form
\begin{equation}\label{eq:phi_decomp}
    \phi_i(x)=\phi_i^{\mathrm{lin}}(x)+\eta_i(x)
    = w_i^\top x+\eta_i(x),
\end{equation}
where \(\eta_i\) is a nonlinear correction satisfying \(\eta_i(0)=0\) and \(\nabla \eta_i(0)=0\).
Substituting \eqref{eq:phi_decomp} into the Koopman eigenfunction PDE
\(\nabla\phi_i(x)^\top f(x)=\lambda_i \phi_i(x)\) yields a first-order linear PDE for \(\eta_i\):
\begin{equation}\label{eq:eta_pde}
    \nabla \eta_i(x)^\top f(x) - \lambda_i \eta_i(x)
    =
    - w_i^\top f_n(x).
\end{equation}
Thus, \(\eta_i\) satisfies a forced transport equation along trajectories of the nonlinear drift. Let \(x(\tau)=s_\tau(x)\) denote the drift trajectory starting at \(x\) at time \(0\).
Multiplying \eqref{eq:eta_pde} by \(e^{-\lambda_i \tau}\) and integrating along characteristics
yields the finite-time path-integral identity
\begin{equation}\label{eq:path_integral_eta_finite}
    \eta_i(x)
    =
    e^{-\lambda_i T}\eta_i(s_T(x))
    +
    \int_{0}^{T} e^{-\lambda_i \tau}\, \big(-w_i^\top f_n(s_\tau(x))\big)\, d\tau,
\end{equation}
for $ T\ge 0$. 
The above representation is not, in general, explicit due to the terminal term
\(e^{-\lambda_i T}\eta_i(s_T(x))\). However, under the assumption that
\(\lim_{T\to \infty} e^{-\lambda_i T}\eta_i(s_T(x))=0\), one obtains the explicit formula
\begin{equation}\label{eq:path_integral_eta_infinite}
    \eta_i(x)
    =
    \int_{0}^{\infty} e^{-\lambda_i \tau}\, \big(-w_i^\top f_n(s_\tau(x))\big)\, d\tau.
\end{equation}
The conditions to ensure \(\lim_{T\to \infty} e^{-\lambda_i T}\eta_i(s_T(x))=0\) are provided in \cite{deka2023path} and can be restrictive.
Motivated by \eqref{eq:path_integral_eta_infinite}, we therefore introduce a family of
path-integral inspired basis functions for approximating \(\eta_i\).
For a given eigenvalue \(\lambda\), 
\begin{equation}\label{eq:path_integral_basis}
    \psi_\ell(x)
    =
    -\int_{0}^{T_\ell} e^{-\lambda \tau}\, w_i^\top f_n(s_\tau(x))\, d\tau,
\end{equation}
for $\ell=1,\ldots,M$, where \(T_\ell>0\) are finite horizons. Each \(\psi_\ell\) is computable from short rollouts
of the drift and does not require gridding the state space.
We then approximate the nonlinear correction using a Galerkin expansion
\begin{align}\label{eq:eta_galerkin}
    \eta_i(x)\approx \sum_{\ell=1}^{M} c_{i,\ell}\, \psi_{\ell}(x)
    =: c_i^\top \Psi(x),
\end{align}
This basis is tailored to the Koopman eigenfunction equation because each \(\psi_\ell\) already encodes
the discounted evolution of the nonlinear part of the vector field along drift trajectories. 
Using snapshot pairs \((x_k,y_k)\) with \(y_k=s_{\Delta t}(x_k)\),
we enforce the finite-time Koopman identity $\phi_i(y_k)\approx e^{\lambda_i \Delta t}\phi_i(x_k)$, with \(\phi_i(x)=w_i^\top x+c_i^\top \Psi(x)\).
This yields a linear least-squares (or projected Galerkin) problem for the coefficients \(c_i\),
which can be solved efficiently from data. Finally, collecting the learned eigenfunctions \(\{\phi_i\}_{i=1}^n\) defines the principal eigenfunction map
\(\Phi(x)=(\phi_1(x),\dots,\phi_n(x))^\top\), which is then used to evaluate the spectral coordinate
\(X=e^{-\Lambda t}\Phi(x)\) pointwise for Hopf reachability.


\section{Simulation Results}
\label{sec:simulations}

\subsection{Two-Dimensional Example: Rigidity and Bounds}
\label{sec:2D_example}

Consider a two-dimensional system governed by $\dot x = f(x) + G(x)u + E(x)d$, where $s_i := \sin(x_i)$, $c_i := \cos(x_i)$, and $d_0 := 2 + c_1 c_2$. The common drift vector field is:
\begin{align}
    f(x) = \frac{1}{d_0} 
    \begin{bmatrix} 
        0.8 c_2 (s_1 - 2x_2) - (x_1 + s_2) \\ 
        -0.8 (s_1 - 2x_2) - 0.5 c_1 (x_1 + s_2) 
    \end{bmatrix}.
\end{align}
The system possesses known Koopman eigenfunctions $\phi_1(x)=s_1 - 2x_2$ and $\phi_2(x)=x_1 + s_2$ with corresponding eigenvalues $\lambda_1=0.8$ and $\lambda_2=-0.5$. The target set is defined in spectral coordinates $\Phi(x) = [\phi_1(x)\ \phi_2(x)]^\top$ as $\mathcal{T}_x=\{x:\|\Phi(x)\|_2\le r\}$, where $J(x)=\|\Phi(x)\|_2^2-r^2$. Simulations are conducted over a time horizon $T=1$ with $r=0.25$, $|u|\le u_{max} = 2.0$, and $|d|\le d_{max} = 0.45$.

For the \textit{approximate rigidity} case, the control and disturbance fields are structured as:
\begin{align*}
    G(x) = \frac{\alpha_u}{d_0}
    \begin{bmatrix} 
        c_2 + 1.4 \\ 
        -1 + 0.7 c_1 
    \end{bmatrix}, 
    E(x) = \frac{\alpha_d}{d_0}
    \begin{bmatrix} 
        0.35 c_2 - 0.5 \\ 
        -0.35 - 0.25 c_1 
    \end{bmatrix},
\end{align*}
where $\alpha_u := 1 + 0.3 s_1 + 0.2 c_2$ and $\alpha_d := 0.8 + 0.2 c_1 - 0.1 s_2$. Following the construction in Section~\ref{sec:approx_rigidity}, the constant spectral input matrices obtained are $B_{u,N} = [1.19\;\; 0.83]^\top$ and $B_{d,N} = [0.34\;\; -0.24]^\top$ by spatial averages over the computational domain. As shown in Figure~\ref{fig:ex1_combined}(d), the resulting Spectral Koopman Hopf (SKH) value function zero-level contour agrees well with the baseline level set method (LSM) toolbox \cite{mitchell2005toolbox} solution; minor discrepancies away from the interface reflect the averaging of state-dependent input fields into constant matrices.

For the \textit{bounds} case, the input fields used are:
\begin{align*}
    G(x) = 
    \begin{bmatrix} 
        c_2 + 1.5 \\ 
        -1 + 0.75 c_1 
    \end{bmatrix}, 
    \quad 
    E(x) = 
    \begin{bmatrix} 
        0.45 c_2 - 0.5 \\ 
        -0.45 - 0.25 c_1 
    \end{bmatrix}.
\end{align*}
To test the Hamiltonian bounds of Section~\ref{sec:reachability_bounds}, we exploit the property $1\le d_0(x)\le 3$ to bound the true Hamiltonian via $\bar{H}^\ell(P,t) \le \bar{H}(X,P,t) \le \bar{H}^u(P,t)$, where,
\begin{equation}
    \begin{aligned}
    \bar{H}^\ell(P,t) &= - 3u_{\max}|\beta_1(P,t)| + d_{\max}|\beta_2(P,t)|, \\
    \bar{H}^u(P,t) &= - u_{\max}|\beta_1(P,t)| + 3d_{\max}|\beta_2(P,t)|,
    \end{aligned}
\end{equation} 
$\beta_1(P,t) = 
\begin{bmatrix}
1 & 0.75
\end{bmatrix}
e^{-\Lambda^\top t}P$
and 
$\beta_2(P,t) = 
\begin{bmatrix}
0.45 & -0.25
\end{bmatrix}
e^{-\Lambda^\top t}P$.

The lower bound $\bar{H}^\ell$ grants maximum control authority and minimum disturbance strength, yielding the value function contours and the over-approximated reachable set $R^\ell$ (white zero-level curve) illustrated in Figure~\ref{fig:ex1_combined}(a). Conversely, the upper bound $\bar{H}^u$ prioritizes the disturbance, producing the under-approximated reachable set $R^u$ shown in Figure~\ref{fig:ex1_combined}(b). Figure~\ref{fig:ex1_combined}(c) directly compares these bounds against LSM baseline (yellow curves), confirming the SKH inner and outer approximations.

\begin{figure}
    \centering
    \includegraphics[width=0.95\linewidth]{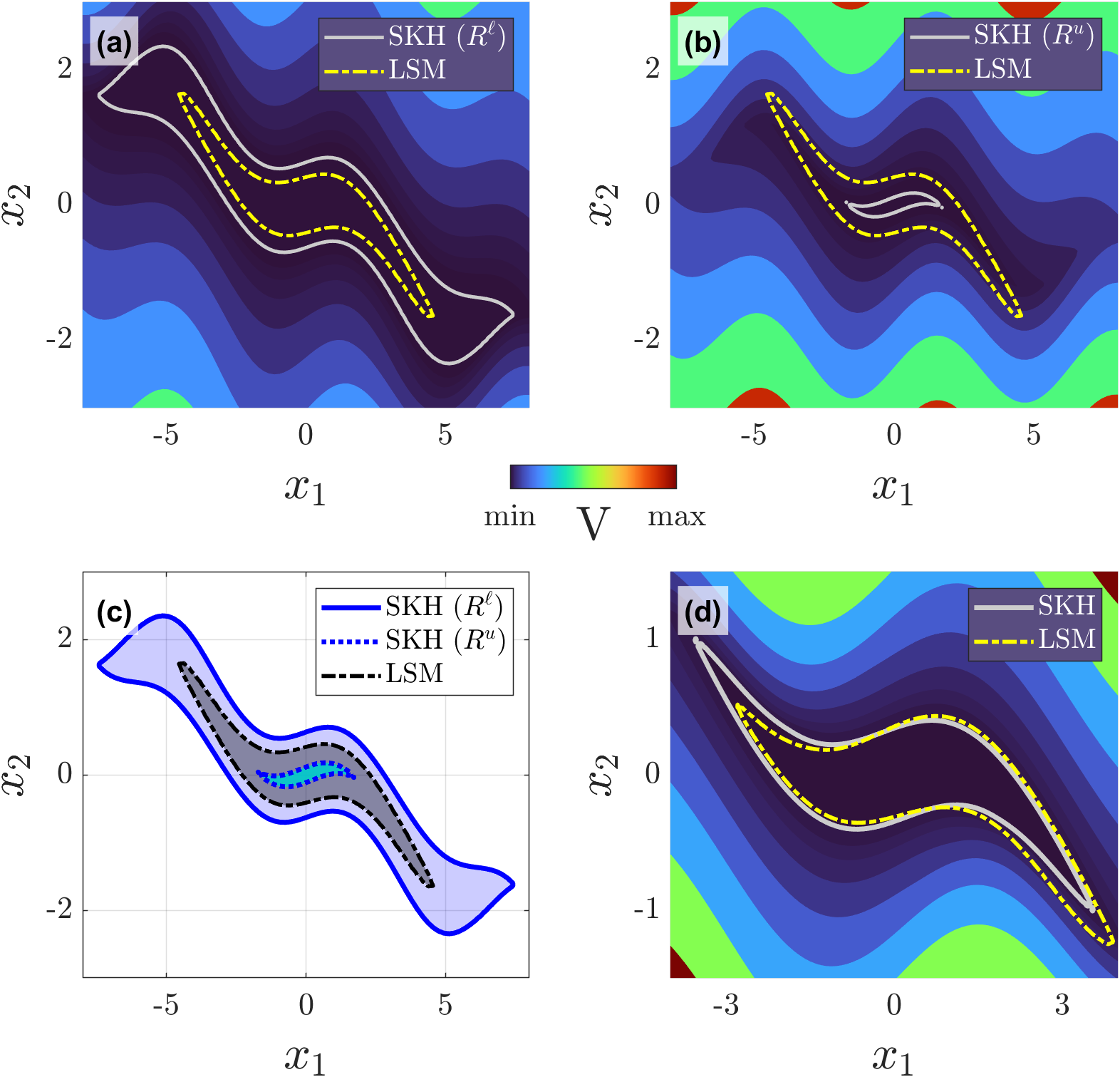}
    \caption{\textcolor{black}{Backward reachable sets for the 2D system: comparison of LSM baseline (yellow) against SKH approximations (white) with respective value function contour plots under (a) $\bar{H}^\ell$, (b) $\bar{H}^u$, and (d) approximate rigidity. (c) Bounds case: reachable set comparison ($R^u \subseteq R \subseteq R^\ell$).}}
    \label{fig:ex1_combined}
\end{figure}

\subsection{N-Link Robotic Arm}
\label{sec:nlink_arm}

The second experiment considers nonlinear robotic-arm dynamics where the Koopman eigenfunctions are not known analytically. The $N$-link system is governed by
\begin{equation}
    M(q)\ddot q+C(q,\dot q)\dot q+K_s(q)+D\dot q=u+d,
\end{equation}
with state $x = \begin{bmatrix} q & \dot q \end{bmatrix}^\top \in \mathbb{R}^{2N}$. Parameters are chosen such that the origin is a locally stable equilibrium of the drift dynamics, and computations are restricted to its neighborhood. 
The Koopman eigenfunctions are approximated via \eqref{eq:phi_decomp}--\eqref{eq:eta_galerkin} as detailed in Section~\ref{sec:eigenfunction_compute}. The terminal data is defined as $J(x)=\|\Phi(x)\|_2^2-r^2$ using the learned eigenfunctions, and the transformed input matrices are approximated by domain averages per \eqref{eq:BuBd_fit}--\eqref{eq:BuBd_sample_approx} from Section~\ref{sec:approx_rigidity}.

For the two-link arm ($N=2, x \in \mathbb{R}^4$), the simulation parameters are $T=0.5, r=0.2$, and $|u_{i}| \le 0.5, |d_{i}| \le 0.2$ for $i \in \{1,2\}$. A grid-based reference solution is computed using the LSM toolbox. Figure~\ref{fig:ex2_combined}(a) illustrates a $q_1$-$q_2$ slice (with $\dot q_1=\dot q_2=0$) comparing an LSM baseline backward reachable set (BRS) (yellow curve) with the Spectral Koopman Hopf (SKH) value function contours and BRS estimate (white zero-level curve). The contours are qualitatively consistent, though the SKH values are generally lower near the interface, producing larger zero sublevel sets. This discrepancy stems from the learned eigenfunction map and averaged spectral input matrices, alongside LSM-inherent factors like finite grid resolution, numerical dissipation, and boundary effects.

For the three-link arm ($N=3, x \in \mathbb{R}^6$), the parameters are $T=1.4, r=0.1$, and $|u_{i}| \le 1, |d_{i}| \le 0.25$ for $i \in \{1,2,3\}$. 
Figure~\ref{fig:ex2_combined}(b) displays the SKH value function contours and BRS estimate on the $q_1$-$q_2$ plane (with $q_3=\dot q_1=\dot q_2=\dot q_3=0$). For this six-dimensional system, the LSM baseline failed due to memory limitations even on a modest grid resolution (21 nodes per dimension). 
The $21^6$ reference grid in this example would contain $85,766,121$ nodes before accounting for storage of auxiliary arrays, whereas SKH does not construct this grid.
These 6D results underscore the primary computational advantage of the SKH method, high-dimensional trajectory data are required only to construct $\Phi_N$ and the averaged spectral input matrices. Once established, the value function is evaluated pointwise, mitigating the curse of dimensionality associated with high-dimensional HJI grids.

\begin{figure}
    \centering
    \includegraphics[width=0.95\linewidth]{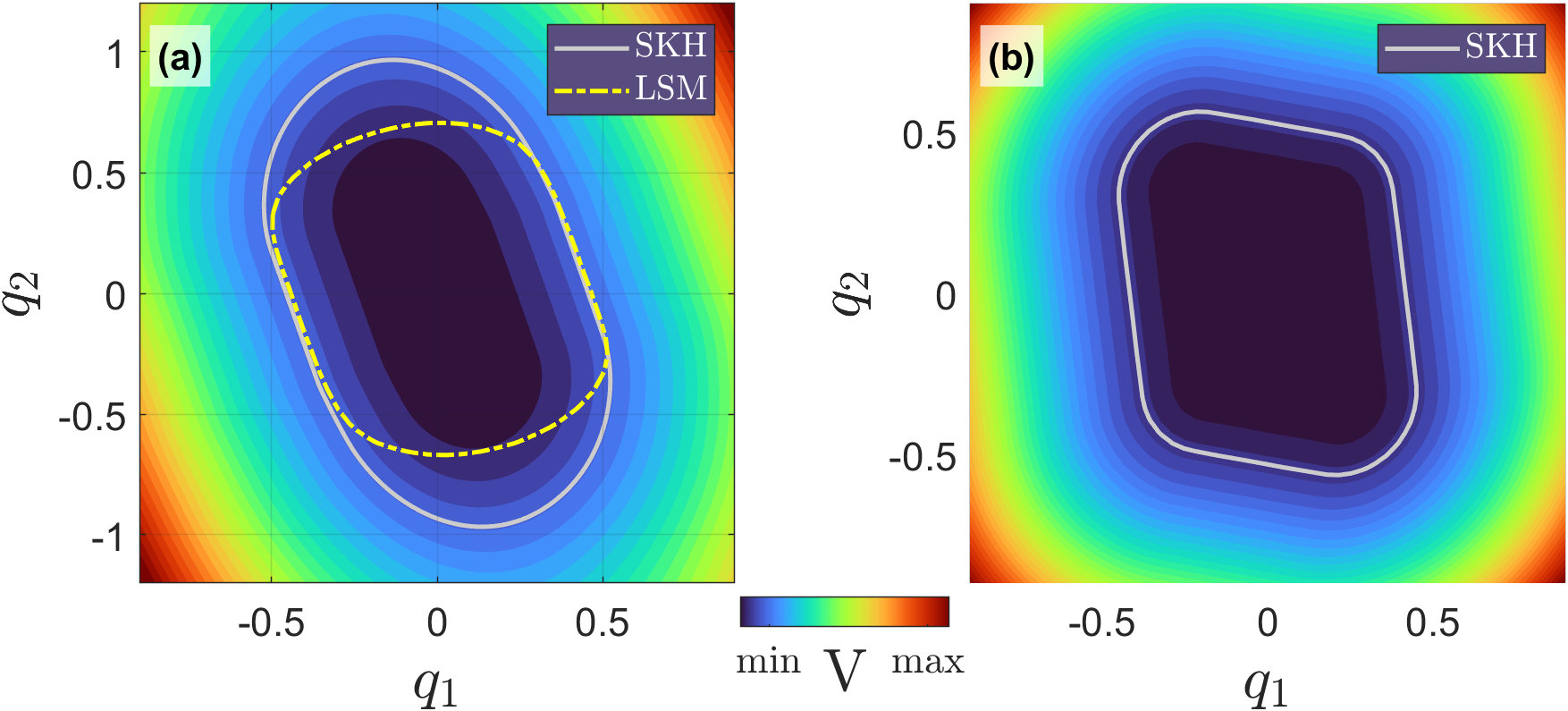}
    \caption{\textcolor{black}{Backward reachable sets for the $N$-link arm: (a) 4D system ($N=2$) SKH value function contours and BRS estimate (white curve) compared against an LSM baseline BRS (yellow curve); (b) 6D system ($N=3$) SKH value function and BRS estimate (white curve) where LSM failed due to memory limits.}}
    \label{fig:ex2_combined}
\end{figure}

\section{Conclusion}

We developed a spectral Koopman-Hopf framework for adversarial reachability analysis of nonlinear systems. By lifting the drift dynamics into Koopman eigenfunction coordinates, the proposed approach replaces state-space discretization of the nonlinear HJI equation by an approximate state-independent, pointwise optimization problem in spectral coordinates. 
The framework employs least-squares approximations of transformed input directions together with Hamiltonian and value-function bounds that yield reachable set enclosures.
The Koopman eigenfunctions are computed using a path-integral formulation, avoiding spatial discretization of the associated eigenfunction PDEs. Numerical examples demonstrate tractable approximation of backward reachable sets for nonlinear adversarial systems while preserving the original system dimension when only the $n$ principal eigenfunctions are retained.

\bibliographystyle{ieeetr}
\bibliography{references}

\end{document}